\newtheorem{theorem}{Theorem}
\newtheorem{lemma}{Lemma}
\newtheorem{assumption}{Assumption}
\title{Truthful and Trustworthy IoT AI Agents via Immediate-Penalty Enforcement under Approximate VCG Mechanisms}
\author{Xun~Shao,~\IEEEmembership{Senior Member,~IEEE},
        Ryuuto~Shimizu,
        Zhi~Liu,~\IEEEmembership{Senior Member,~IEEE},
        Kaoru~Ota,~\IEEEmembership{Senior Member,~IEEE},
        and Mianxiong~Dong,~\IEEEmembership{Senior Member,~IEEE}%
\thanks{X.~Shao and R.~Shimizu are with the Department of Electrical and Electronic
Information Engineering, Toyohashi University of Technology, Toyohashi, 
Aichi 441--8580, Japan (e-mail: xun.shao@tut.jp).}%
\thanks{Z.~Liu is with the Department of Computer and Network Engineering,
University of Electro-Communications, Tokyo 182--8585, Japan 
(e-mail: liuzhi@uec.ac.jp).}%
\thanks{K.~Ota is with the Graduate School of Information Sciences,
Tohoku University, Sendai 980--8579, Japan, and also with the Center for 
Computer Science (CCS), Muroran Institute of Technology, Muroran 050--8585, 
Japan (e-mail: k.ota@tohoku.ac.jp).}%
\thanks{M.~Dong is with the Department of Sciences and Informatics,
Muroran Institute of Technology, Muroran, Hokkaido 050--8585, Japan 
(e-mail: mxdong@muroran-it.ac.jp).}%
\thanks{Manuscript received January~1,~2025; revised March~1,~2025.}
}
\begin{document}
\maketitle

\begin{abstract}
The deployment of autonomous AI agents in Internet of Things (IoT) energy
systems requires decision-making mechanisms that remain robust, efficient,
and trustworthy under real-time constraints and imperfect monitoring.
While reinforcement learning enables adaptive prosumer behaviors, ensuring
economic consistency and preventing strategic manipulation remain open
challenges, particularly when sensing noise or partial observability degrades
the operator’s ability to verify actions. This paper introduces a trust-enforcement framework for IoT energy trading that combines an $\alpha$-approximate Vickrey–Clarke–Groves (VCG) double auction with an immediate one-shot penalty. Unlike reputation- or
history-based approaches, the proposed mechanism restores truthful reporting
within a single round, even when allocation accuracy is approximate and
monitoring is noisy. We theoretically characterize the incentive gap induced
by approximation and derive a penalty threshold that guarantees truthful
bidding under bounded sensing errors. To evaluate learning-enabled prosumers, we embed the mechanism into a multi-agent reinforcement learning environment reflecting stochastic generation, dynamic loads, and heterogeneous trading opportunities.
Experiments show that improved allocation accuracy consistently reduces
deviation incentives, the required penalty matches analytical predictions,
and learned bidding behaviors remain stable and interpretable despite
imperfect monitoring. These results demonstrate that lightweight penalty
designs can reliably align strategic IoT agents with socially efficient
energy-trading outcomes.
\end{abstract}

\begin{IEEEkeywords}
Trustworthy AI agents, approximate VCG, incentive compatibility, immediate
penalty, multi-agent reinforcement learning, IoT energy systems, peer-to-peer trading, edge intelligence.
\end{IEEEkeywords}

\IEEEpeerreviewmaketitle

\section{Introduction}

The rapid expansion of the Internet of Things (IoT) has created large-scale networks of heterogeneous sensors, distributed devices, and autonomous software agents that must jointly perceive, reason, and act in dynamic cyber–physical environments. Modern IoT deployments—covering smart grids, smart homes, industrial automation, and urban infrastructure—demand AI agents capable of real-time decision-making under uncertainty, resource constraints, and imperfect sensing \cite{atzori2010internet,shi2016edge,al2015internet}. Reinforcement learning (RL) and multi-agent RL (MARL) have therefore emerged as attractive approaches for enabling adaptive IoT decision-making \cite{schulman2017ppo,lowe2017multi}. Yet, despite their flexibility, learning-based agents can behave unpredictably, exploit approximation artifacts, or strategically manipulate decisions, posing serious risks to system reliability and social welfare.

These challenges become particularly salient in IoT market environments such as peer-to-peer (P2P) energy trading, where autonomous prosumers submit bids that directly influence allocation, pricing, and welfare outcomes. Trustworthy decision-making is essential, but difficult to guarantee when mechanisms must operate with limited computation, noisy monitoring, and decentralized information \cite{alrawais2017security,tushar2020peer}. Classical mechanism design provides strong guarantees—most notably through the Vickrey--Clarke--Groves (VCG) mechanism \cite{vickrey1961counterspeculation,clarke1971multipart,groves1973incentives}. However, exact VCG allocation is often computationally infeasible for IoT-scale systems, where real-time constraints and edge-device limitations necessitate the use of \emph{approximate} allocation algorithms \cite{lehmann2002combinatorial,nisan2007algorithmic}. Approximation breaks dominant-strategy truthfulness, producing incentive gaps that RL agents may exploit.

Existing methods such as repeated-game punishments and long-term reputation \cite{fudenberg1986folk,mailath2006repeated} can theoretically enforce truthful behavior. However, these approaches assume persistent agent identities, stable connectivity, high discount factors, and reliable monitoring—assumptions fundamentally incompatible with IoT environments, where devices may join or leave dynamically, communication is intermittent, and sensing is noisy.

To address these limitations, this paper proposes an \emph{immediate-penalty approximate VCG mechanism} tailored for IoT systems. We first analyze the incentive structure of an $\alpha$-approximate VCG double auction and derive a tight bound on the approximation-induced incentive gap. We then introduce a one-shot penalty mechanism that restores truthful reporting \emph{within a single interaction round}, even under imperfect deviation detection that naturally models IoT communication noise and sensing uncertainty. Finally, we integrate this mechanism into a MARL-based IoT energy-trading environment and empirically demonstrate that learned agent behaviors align closely with theoretical predictions.

This paper makes the following key contributions:

\begin{itemize}
    \item \textbf{Tight incentive characterization for approximate VCG:} We derive a sharp bound on the incentive gap introduced by $\alpha$-approximate VCG allocation, linking approximation accuracy directly to IoT computational constraints.
    
    \item \textbf{One-shot truthful mechanism under imperfect monitoring:} We propose an immediate-penalty mechanism that guarantees truthful reporting within a single round, and we establish a closed-form condition for truthful equilibrium under noisy or incomplete IoT deviation detection.
    
    \item \textbf{Learning-agent validation in IoT market environments:} We integrate the mechanism with multi-agent reinforcement learning and show that learned bidding behaviors conform closely to the theoretical incentive predictions across approximation levels, monitoring noise, and penalty magnitudes.
    
    \item \textbf{Trustworthy IoT AI agents:} Our framework provides a deployable and interpretable mechanism for ensuring trustworthy autonomous decision-making in IoT energy systems.
\end{itemize}

The remainder of the paper is organized as follows.  
Section~\ref{sec:related} reviews prior work on IoT agents, mechanism design, approximate VCG, and learning-based market mechanisms.  
Section~\ref{sec:model} presents the IoT market model and the $\alpha$-approximate VCG mechanism.  
Section~\ref{sec:theory} introduces the immediate-penalty enforcement scheme and establishes truthful equilibrium conditions under imperfect monitoring.  
Section~\ref{sec:rl-setup} describes the MARL environment and training setup.  
Section~\ref{sec:results} reports empirical results validating the theoretical predictions.  
Section~\ref{sec:conclusion} concludes the paper.

\section{Related Work}
\label{sec:related}

\subsection{AI Agents and Trustworthy Decision-Making in IoT}

The deployment of autonomous AI agents in IoT systems raises challenges
related to robustness, reliability, and strategic behavior under uncertainty.
Foundational IoT surveys highlight the importance of trustworthy decision
making when heterogeneous devices interact under noisy and resource-limited
conditions \cite{atzori2010internet,al2015internet,shi2016edge}. Reinforcement
learning (RL) and multi-agent RL (MARL) have therefore become attractive tools
for enabling adaptive control in IoT environments
\cite{busoniu2008comprehensive,hernandez2019survey,yang2021review,zhang2021multi,
schulman2017ppo,lowe2017multi}. However, RL-based IoT agents may
opportunistically exploit weaknesses in mechanism design or monitoring,
leading to unreliable or strategically unsafe behavior. These concerns are
especially acute in IoT-scale cyber--physical systems such as smart grids and
energy markets, where imperfect sensing and wireless loss degrade the
operator's ability to verify actions.

Recent IoT-J works emphasize the need to guarantee trustworthy AI behavior
despite noisy observations and constrained devices. Xu et al.\ investigated
secure cooperative edge computing under imperfect information \cite{xu2020secure},
and Yan et al.\ provided a comprehensive survey of data trustworthiness in IoT
\cite{yan2020trust}. Deep learning and reinforcement learning have also been
used for resource allocation and IoT edge control \cite{sun2020deep,sun2019rl},
demonstrating that control policies must be robust not only to stochastic
dynamics but also to partial observability and degraded monitoring. These
studies collectively indicate that IoT environments are inherently noisy,
attack-prone, and incentive-sensitive---motivating mechanisms that ensure
truthful and stable agent behavior under imperfect monitoring, as addressed
in this paper.

\subsection{Mechanism Design and Approximate VCG}

Mechanism design provides powerful tools for eliciting truthful behavior in
strategic multi-agent environments. Classical results such as the
Vickrey--Clarke--Groves (VCG) mechanism
\cite{vickrey1961counterspeculation,clarke1971multipart,groves1973incentives}
and McAfee's double auction \cite{mcafee1992dominant} guarantee
dominant-strategy truthfulness when exact welfare maximization is tractable.
However, in IoT-scale markets with heterogeneous prosumers, intermittent
connectivity, and real-time constraints, computing the exact welfare-optimal
allocation often becomes intractable \cite{nisan2007algorithmic}. This has
motivated extensive study of approximate VCG mechanisms
\cite{lehmann2002combinatorial,archer2001truthful,nemhauser1978analysis},
which achieve scalable allocation but lose exact incentive compatibility due
to welfare approximation error.

In parallel, the algorithmic game theory and AI-for-auctions literature has
begun to analyze mechanisms when participants are learning agents rather than
fully rational optimizers. Survey work on strategyproof mechanism design in
AI \cite{conitzer2019foundations} and AI-enabled auctions \cite{zheng2022ai4a}
highlights the tension between expressive learning-based mechanisms and
formal incentive guarantees. Recent studies on differentiable and
learning-based auctions
\cite{duetting2019optimal,brero2019rl4ica,golowich2021nearoptimal,
curry2022strategicml} show that approximation and learning can introduce
nontrivial incentive gaps even when the underlying mechanism is derived from
truthful designs. Our work builds on these foundations by formally
characterizing the incentive gap induced by an $\alpha$-approximate VCG
double auction in IoT environments and introducing a lightweight penalty
mechanism that restores truthful behavior within a single round.

\subsection{Repeated-Game Enforcement and Limitations in IoT}

Repeated-game theory provides classical tools for enforcing cooperation or
truthfulness via long-term punishments \cite{fudenberg1986folk,mailath2006repeated}.
However, these methods typically rely on persistent agent identities, stable
monitoring, reliable communication, and high discount factors. IoT deployments
violate these assumptions: devices frequently join or leave, links suffer from
loss and delay, and sensing is noisy \cite{alrawais2017security}. Studies such
as \cite{xu2020secure,yan2020trust} demonstrate that imperfect information is
endemic in IoT and must be explicitly accounted for in any enforcement method.

Accordingly, repeated-game-style punishments are difficult to deploy in IoT
markets, motivating mechanisms that operate effectively \emph{within a single
round} under imperfect monitoring. Our immediate-penalty approach directly
addresses this need and analytically characterizes the penalty required to
offset approximation-induced incentives.

\subsection{Learning-Based Market Design and MARL in IoT}

Integrating learning and mechanism design has recently emerged as a promising
direction. Prior work explores differentiable auctions and deep-learning-based
auction mechanisms \cite{duetting2019optimal,zheng2022ai4a}, as well as neural
double auctions \cite{suehara2024neural}, where allocation and payment rules
are parameterized and optimized by gradient-based methods. These approaches
improve flexibility and empirical performance but often lack formal incentive
guarantees or assume perfect monitoring.

In IoT and smart-grid contexts, learning-based optimization and control have
been applied to a wide range of resource management problems. Zhang et al.\
used deep reinforcement learning for peer-to-peer energy trading
\cite{zhang2022deep}, while Sun et al.\ studied RL-based dynamic offloading
and deep-learning-based resource allocation in edge IoT systems
\cite{sun2019rl,sun2020deep}. These works illustrate the power of MARL and
deep RL in complex cyber--physical markets, but they typically treat the
underlying market rules as fixed and do not explicitly analyze incentive
compatibility.

Several studies investigate decision reliability under
uncertain wireless conditions, such as secure and reliable computation
offloading with imperfect information \cite{xu2020secure}. These studies
support the need for explicit modeling of imperfect monitoring---captured in
our framework by the deviation detection probability $\rho$ and bid tolerance
$\varepsilon$. Our work differs from prior MARL-based market studies by
providing a mechanism with explicit incentive guarantees under approximate
allocation and imperfect monitoring, validated through multi-agent
reinforcement learning in a realistic IoT energy-trading environment.

\section{Model and Approximate VCG Mechanism}
\label{sec:model}

This section introduces the IoT market model, the classical VCG mechanism,
the computational constraints that motivate approximation, and the resulting
incentive gap under approximate allocation. We then interpret the key parameters
$\alpha$ and $C$ in IoT-scale systems and present the concrete instantiation
of the approximate allocator used in our MARL experiments.

\subsection{Market Model and Social Welfare}
\label{subsec:model}

We consider a set of prosumer agents $N=\mathcal{B}\cup\mathcal{S}$, where 
$\mathcal{B}$ and $\mathcal{S}$ denote buyers and sellers. Each seller 
$i\in\mathcal{S}$ has a private cost function $c_i(q)$, and each buyer 
$j\in\mathcal{B}$ has a private valuation function $v_j(q)$, both continuous,
non-decreasing, and concave.

Let $x_{ij}\ge 0$ denote the amount traded from seller $i$ to buyer $j$.
Feasible allocations satisfy
\begin{align}
  \sum_{j\in\mathcal{B}} x_{ij} \le s_i, \qquad
  \sum_{i\in\mathcal{S}} x_{ij} \le d_j,
\end{align}
and the social welfare is
\begin{equation}
    W(x)
    = \sum_{j\in\mathcal{B}} v_j\!\left(\sum_i x_{ij}\right)
    - \sum_{i\in\mathcal{S}} c_i\!\left(\sum_j x_{ij}\right).
    \label{eq:welfare}
\end{equation}

This formulation directly models IoT energy-trading settings, where each
prosumer is a distributed IoT device (rooftop PV, home EMS, EV charger)
with bounded capabilities.

\subsection{Types, Mechanisms, and Utilities}
\label{subsec:types}

Each agent $k$ has a private type $\theta_k$ determining its valuation or
cost parameters. A mechanism $\mathcal{M}=(x,p)$ maps reported types
$\hat{\bm{\theta}}$ to an allocation $x(\hat{\bm{\theta}})$ and payments
$p(\hat{\bm{\theta}})$. The quasi-linear utility is
\[
u_k(\hat{\bm{\theta}})
=
\begin{cases}
  v_k(x_k) - p_k, & k\in\mathcal{B},\\
  p_k - c_k(x_k), & k\in\mathcal{S}.
\end{cases}
\]
Truthfulness requires
\[
  u_k(\theta_k,\theta_{-k}) \ge u_k(\hat{\theta}_k,\theta_{-k})
  \quad \forall k,\ \forall\hat{\theta}_k.
\]

\subsection{Exact VCG and IoT Computational Constraints}
\label{subsec:vcg}

Under exact VCG, the allocation maximizes welfare~\eqref{eq:welfare}, and
payments internalize externalities, guaranteeing dominant-strategy truthfulness.
However, computing a welfare-maximizing allocation is NP-hard in many
multi-dimensional or constrained environments.

This limitation is especially severe in IoT energy systems:

\begin{itemize}
  \item edge devices have limited CPU/memory budgets,
  \item local trading intervals are short (5--60 s),
  \item communication between devices and local auction servers is noisy,
  \item prosumer populations evolve frequently due to mobility or outages.
\end{itemize}

Hence exact VCG is impractical at IoT scale, motivating approximate allocation.

\subsection{Approximate VCG Mechanism}
\label{subsec:approx}

Let $\mathcal{A}_\alpha$ be an allocation rule that satisfies
\begin{equation}
  W(\mathcal{A}_\alpha(\hat{\bm{\theta}}))
  \ge \alpha\, W^\star(\hat{\bm{\theta}}),
  \qquad 0<\alpha\le 1.
  \label{eq:alpha-approx}
\end{equation}
Payments take the VCG form but use $\mathcal{A}_\alpha$ instead of the exact
optimizer. Because $\mathcal{A}_\alpha$ is not welfare-maximizing, strict
truthfulness is generally lost.

\subsection{Incentive Gap Under Approximation}
\label{subsec:gap}

We adopt the standard assumption of bounded marginal contribution:

\begin{assumption}[Bounded marginal contribution]
\label{ass:bounded}
For any agent $k$ and any feasible allocation $x$,
\[
  |W(x) - W(x^{-k})| \le C.
\]
\end{assumption}

This holds naturally in IoT markets because each device has bounded energy
capacity, so no single prosumer can dominate welfare.

\begin{lemma}[Bounded incentive gap]
\label{lem:gap}
For any agent $k$ and any misreport,
\[
  u^{\mathrm{dev}}_k - u^{\mathrm{truth}}_k \le (1-\alpha)C.
\]
\end{lemma}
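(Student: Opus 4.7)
The plan is to reduce the deviation gain to a welfare gap between two outputs of the approximate allocator $\mathcal{A}_\alpha$, and then close that gap using \eqref{eq:alpha-approx} and Assumption~\ref{ass:bounded}. First, I would rewrite each utility in marginal-welfare form. Assuming every agent other than $k$ reports truthfully, the VCG-style payment based on $\mathcal{A}_\alpha$ causes $k$'s quasi-linear utility to collapse to
\[
u_k(\hat{\theta}_k,\theta_{-k}) = W\!\bigl(\mathcal{A}_\alpha(\hat{\theta}_k,\theta_{-k})\bigr) - W_{-k}\!\bigl(\mathcal{A}_\alpha(\theta_{-k})\bigr),
\]
where $W$ is evaluated with $k$'s true type and $W_{-k}$ denotes the welfare of all agents other than $k$. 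The second term depends only on $\theta_{-k}$ and therefore cancels in $u_k^{\mathrm{dev}}-u_k^{\mathrm{truth}}$, reducing the incentive gap to $W(\mathcal{A}_\alpha(\hat{\theta}_k,\theta_{-k})) - W(\mathcal{A}_\alpha(\theta))$.

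Next, I would apply the approximation guarantee. The deviated output $\mathcal{A}_\alpha(\hat{\theta}_k,\theta_{-k})$ is feasible under the physical capacity constraints of Section~\ref{subsec:model}, so its true welfare satisfies $W(\mathcal{A}_\alpha(\hat{\theta}_k,\theta_{-k})) \le W^\star(\theta)$. The truthful output satisfies $W(\mathcal{A}_\alpha(\theta)) \ge \alpha\,W^\star(\theta)$ directly from \eqref{eq:alpha-approx}. Combining the two yields $u_k^{\mathrm{dev}} - u_k^{\mathrm{truth}} \le (1-\alpha)\,W^\star(\theta)$.

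Finally, I would invoke Assumption~\ref{ass:bounded} to turn the global welfare factor into the per-agent bound $C$. The idea is that the welfare gap $k$ can induce by misreporting is driven by $k$'s own marginal influence on the chosen allocation: expanding $W(\mathcal{A}_\alpha(\hat{\theta}_k,\theta_{-k})) - W(\mathcal{A}_\alpha(\theta))$ into the change in $k$'s own term $v_k$ (or $-c_k$) plus an externality on the remaining agents, and then absorbing the externality via $|W(x) - W(x^{-k})|\le C$, the quantity $W^\star(\theta)$ in the previous step is replaced by $C$, producing the claimed bound.

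I expect the main obstacle to lie in this last step. The clean approximation chain naturally gives $(1-\alpha)W^\star(\theta)$, and tightening it to $(1-\alpha)C$ hinges on the precise reading of the marginal-contribution assumption. If $x^{-k}$ denotes the allocation with $k$'s components removed, then $|v_k|,|c_k|\le C$ follows immediately at the pointwise level, and the remaining work is to show that the externality contribution of the approximation error is aligned with $k$'s own slack rather than multiplying it by the number of affected agents. I would write this decomposition out explicitly to confirm that the target $(1-\alpha)C$ is achieved without a hidden multiplicative constant.
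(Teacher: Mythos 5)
Your first two steps are sound and are the standard route: with the pivot term independent of $k$'s report, the deviation gain reduces to $W(\mathcal{A}_\alpha(\hat{\theta}_k,\theta_{-k}))-W(\mathcal{A}_\alpha(\theta))$, and combining feasibility of the deviated allocation with \eqref{eq:alpha-approx} gives $u^{\mathrm{dev}}_k-u^{\mathrm{truth}}_k\le(1-\alpha)W^\star(\theta)$. The genuine gap is exactly where you suspect it: the passage from $(1-\alpha)W^\star(\theta)$ to $(1-\alpha)C$, and the decomposition you sketch does not close it. Assumption~\ref{ass:bounded} bounds $|W(x)-W(x^{-k})|$ for a \emph{single} fixed allocation $x$, i.e., agent $k$'s marginal contribution to that allocation. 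Your deviation gap, however, compares two \emph{different} allocations, $x'=\mathcal{A}_\alpha(\hat{\theta}_k,\theta_{-k})$ and $x=\mathcal{A}_\alpha(\theta)$; writing $W(x')-W(x)=[W(x')-W(x'^{-k})]+[W(x'^{-k})-W(x^{-k})]+[W(x^{-k})-W(x)]$, the outer brackets are each at most $C$, but the middle term—the change in the other agents' welfare across the two allocations—is not a marginal contribution of $k$ and is not controlled by $C$: a single misreport can reshuffle trades among arbitrarily many other agents. Likewise, \eqref{eq:alpha-approx} is a multiplicative guarantee on \emph{total} welfare, so the slack it produces scales with the whole market ($W^\star$ typically grows with the number of prosumers), and no step in your chain converts that global slack into the per-agent constant $C$; in general $(1-\alpha)W^\star(\theta)$ can greatly exceed $(1-\alpha)C$, so the absorption you hope for cannot hold without further assumptions.

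For comparison, the paper's own proof is a one-line assertion that the bound ``follows from the $\alpha$-approximation bound and the decomposition of welfare into marginal contributions, each bounded by $C$''---in effect it treats the approximation slack as falling on agent $k$'s marginal contribution (a per-agent approximation guarantee of the form $W(\mathcal{A}_\alpha)\ge W^\star-(1-\alpha)\,[W^\star-W^\star_{-k}]$, or similar), rather than deriving $(1-\alpha)C$ from \eqref{eq:alpha-approx} as literally stated. So your derivation follows the same intended route and, to your credit, makes explicit the step the paper leaves implicit; but as written your proposal stops at $(1-\alpha)W^\star(\theta)$ plus a hope, and therefore does not establish the lemma. To repair it you would need either a strengthened, per-agent version of the approximation assumption, or an argument restricting how much a single misreport can perturb the welfare accruing to the other agents.
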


\begin{proof}
Follows from the $\alpha$-approximation bound and the decomposition of welfare
into marginal contributions, each bounded by $C$.
\end{proof}

Lemma~\ref{lem:gap} quantifies how approximation weakens incentive compatibility:
the smaller the approximation ratio $\alpha$, the larger the potential gain
from misreporting.

\subsection{IoT-Scale Interpretation of $\alpha$ and $C$}
\label{subsec:iot-interpret}

The parameters $\alpha$ and $C$ have direct physical meaning in IoT markets:

\begin{itemize}
  \item $\alpha$ captures computational capability.  
        Higher $\alpha$ corresponds to more powerful edge/cloud solvers;
        lower $\alpha$ models lightweight device-side heuristics.

  \item $C$ represents the maximum welfare impact of a single IoT prosumer.  
        In practice, $C$ scales with device capacity (e.g., battery size,
        PV output) and is typically small relative to system-wide welfare.
\end{itemize}

Thus the incentive gap $(1-\alpha)C$ admits a natural IoT interpretation:
more accurate allocation (higher $\alpha$) or smaller prosumer capacity (lower
$C$) directly reduces strategic manipulation incentives.

\subsection{Instantiation of $\mathcal{A}_\alpha$ for Simulation}
\label{subsec:instantiation}

To cleanly analyze how $\alpha$ affects MARL behavior, we employ a synthetic
two-step construction:

\begin{enumerate}
  \item Compute a welfare-maximizing allocation $\tilde{x}$ via convex 
        optimization (small-scale oracle).
  \item Derive $x^\alpha$ by scaling or thinning $\tilde{x}$ such that  
        $W(x^\alpha) \approx \alpha W(\tilde{x})$.
\end{enumerate}

This isolates the effect of approximation from confounding algorithmic details
and allows MARL agents to experience controlled incentive distortions driven
solely by $(1-\alpha)$.

\section{Immediate Penalty and Truthful Equilibrium}
\label{sec:theory}

This section develops the enforcement mechanism that restores truthful
bidding in the $\alpha$-approximate VCG double auction. We begin by
formalizing deviation detection under IoT noise, introduce the one-shot
penalty rule, compute the expected gain from misreporting, and then
state and prove the main equilibrium theorem. We conclude with an
interpretation of the enforcement condition in IoT energy systems and
a brief discussion of how to calibrate the penalty in practice.

\subsection{Deviation Detection under IoT Noise}
\label{subsec:monitoring}

IoT devices communicate bids through noisy, bandwidth-limited channels.
We model deviation detection using the following rule. Let $b_k$ denote
the bid submitted by agent $k$ in a round and let $v'_k$ denote its true
marginal valuation. A deviation is said to occur when
\[
  |b_k - v'_k| > \varepsilon ,
\]
where $\varepsilon>0$ is a tolerance that captures measurement noise,
quantization, and short-horizon forecasting errors in IoT devices.

A deviation is detected with probability $\rho\in(0,1]$ independent of
other events. Imperfect monitoring arises naturally from:

\begin{itemize}
  \item sensor noise and forecast uncertainty in distributed PV/load,
  \item packet drops or delays in low-power IoT communication protocols,
  \item intermittent participation and temporary device unavailability.
\end{itemize}

The parameter $\rho$ quantifies overall monitoring reliability.

\subsection{IoT Communication Scenarios and Noise Parameters}
\label{subsec:noise-params}

The abstract noise parameters $\varepsilon$ and $\rho$ introduced above can be
grounded in representative operating conditions of IoT communication stacks.
In practice, IoT energy systems deploy low-power wireless technologies such as
IEEE~802.15.4-based mesh networks, Wi-SUN field area networks, and sub-GHz
LoRaWAN links for local coordination among prosumers and gateways. These
protocols are designed for energy-efficient operation and wide-area coverage,
but consequently exhibit non-negligible packet error rates and time-varying
link quality due to fading, interference, and device mobility.

The tolerance $\varepsilon$ captures the combined effect of local sensing and
forecasting errors as well as quantization at IoT devices. Short-horizon
forecasting of net load and photovoltaic generation typically incurs errors on
the order of a few percent of nominal power, and sensor noise further perturbs
measured quantities. As a result, even honestly behaving prosumers may submit
bids whose implied marginal valuations differ slightly from their underlying
true valuations. Modeling this with a tolerance $\varepsilon$ in the range of
a few percent reflects the indistinguishability between such benign deviations
and small strategic misreports: deviations below $\varepsilon$ cannot be
reliably attributed to manipulation, whereas larger deviations are treated as
intentional.

The monitoring reliability $\rho$ summarizes the probability that a
significant deviation, once it occurs, is successfully detected by the
system. In short-range mesh deployments with stable links, the effective
monitoring reliability can be close to one ($\rho \approx 0.9$--$1.0$), since
most packets are delivered and temporary outages are rare. In contrast,
long-range or interference-prone settings---for example, wide-area sub-GHz
links or dense urban environments---experience more frequent packet loss,
temporary gateway unavailability, and sporadic connectivity, which naturally
lower the effective detection probability into a moderate range
(e.g., $\rho \approx 0.6$--$0.8$).

By relating $\varepsilon$ and $\rho$ to typical sensing accuracy and link
reliability in IoT energy systems, the enforcement condition
$\Pi > \tfrac{1-\alpha}{\rho} C$ acquires a concrete operational meaning.
Improved communication reliability and forecasting accuracy not only benefit
physical control, but also reduce the minimal penalty required to sustain
truthful bidding. Conversely, harsher IoT environments with lower $\rho$ or
larger effective $\varepsilon$ demand stronger enforcement to preserve
incentive compatibility.

\subsection{One-Shot Penalty Mechanism}
\label{subsec:penalty}

Each round consists of (i) agents submitting bids, (ii) the mechanism
computing an $\alpha$-approximate VCG allocation and payments, and
(iii) deviation detection. If a deviation by agent $k$ is detected,
an immediate one-shot penalty $\Pi>0$ is imposed \emph{within the same
round}. The agent’s per-round utility becomes
\begin{equation}
    u'_k = u_k - D_k \Pi,
    \label{eq:uprime}
\end{equation}
where $u_k$ is the baseline quasi-linear utility and $D_k\in\{0,1\}$
indicates whether a deviation is detected.

Unlike repeated-game punishments, this mechanism is entirely myopic:
the penalty does not affect future states or continuation values.
Truthfulness must therefore be enforced at the level of a single
stage game.

\subsection{Expected Gain from Deviation}
\label{subsec:gain}

Consider a one-shot deviation by agent $k$, holding the other agents'
reports fixed. Let $u_k^{\mathrm{truth}}$ and $u_k^{\mathrm{dev}}$ denote
the baseline utilities (without penalties) under truthful reporting and
misreporting. Under immediate penalties, the expected utilities become
\[
  \mathbb{E}[u'_k{}^{\mathrm{dev}}]
  = u_k^{\mathrm{dev}} - \rho\Pi,
  \qquad
  \mathbb{E}[u'_k{}^{\mathrm{truth}}]
  = u_k^{\mathrm{truth}}.
\]
The expected gain from deviation is therefore
\begin{equation}
  \Delta U_k
  = (u_k^{\mathrm{dev}} - u_k^{\mathrm{truth}}) - \rho\Pi.
  \label{eq:deltaU}
\end{equation}

Lemma~\ref{lem:gap} implies that the approximation-induced utility gain
is at most $(1-\alpha)C$. Substituting this bound into
\eqref{eq:deltaU} yields:
\begin{equation}
  \Delta U_k \le (1-\alpha)C - \rho\Pi.
  \label{eq:deltaU-bound}
\end{equation}

\subsection{Truthfulness Condition and Main Theorem}
\label{subsec:theorem}

We now derive the enforcement condition under which deviation becomes
strictly unprofitable for all agents.

\begin{theorem}[Truthful equilibrium under immediate penalty]
\label{thm:truthful-sgpe}
In the $\alpha$-approximate VCG double auction with bounded marginal
contribution $C$, deviation detection probability $\rho$, and immediate
penalty $\Pi$, truthful reporting is a subgame-perfect equilibrium if
\begin{equation}
  \Pi > \frac{1-\alpha}{\rho}\, C.
  \label{eq:threshold}
\end{equation}
Under perfect monitoring $(\rho=1)$, the condition simplifies to
$\Pi > (1-\alpha)C$.
\end{theorem}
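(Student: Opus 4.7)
The plan is to reduce the subgame-perfect equilibrium claim to a single-stage incentive inequality, exploiting the fact that the immediate-penalty mechanism is myopic. Because $\Pi$ is charged within the same round as the detected deviation and carries no continuation consequence, every proper subgame is strategically equivalent to the current stage game itself. Establishing that truthful reporting is a best response stage-by-stage therefore automatically yields subgame-perfection, and the problem collapses to comparing expected one-shot utilities.

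Given this reduction, I would proceed in three routine steps. First, fix an arbitrary agent $k$, an arbitrary type $\theta_k$, an arbitrary misreport $\hat{\theta}_k \neq \theta_k$, and an arbitrary opponent profile $\theta_{-k}$. Second, invoke expression~\eqref{eq:deltaU} for the expected utility gain from deviation, $\Delta U_k = (u_k^{\mathrm{dev}} - u_k^{\mathrm{truth}}) - \rho\Pi$, and apply Lemma~\ref{lem:gap} to bound the bracketed baseline gain by $(1-\alpha)C$, yielding~\eqref{eq:deltaU-bound}. Third, substitute the hypothesis $\Pi > (1-\alpha)C/\rho$ to obtain $\rho\Pi > (1-\alpha)C$ and hence $\Delta U_k < 0$. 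Since this strict inequality holds uniformly in $(k,\theta_k,\hat{\theta}_k,\theta_{-k})$, truthful reporting is a dominant strategy in the stage game and therefore a subgame-perfect equilibrium of the repeated interaction. The perfect-monitoring corollary follows by specializing $\rho = 1$.

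The main obstacle is not in the concluding algebra, which is essentially one line given Lemma~\ref{lem:gap} and equation~\eqref{eq:deltaU-bound}, but in justifying the reduction to a purely one-shot analysis and confirming that the penalty interacts cleanly with the quasi-linear utility. Specifically, one must verify that $\Pi$ enters additively in~\eqref{eq:uprime} without modifying the $\alpha$-approximation structure used to prove Lemma~\ref{lem:gap}, and that the detection probability $\rho$ is genuinely independent of the submitted bid $b_k$, as postulated in Section~\ref{subsec:monitoring}. If instead $\rho$ depended on $\hat{\theta}_k$, the expected penalty term in~\eqref{eq:deltaU} would need to be replaced by $\rho(\hat{\theta}_k)\Pi$, and one would require an additional monotonicity assumption on the detection technology to preserve the threshold form. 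Under the bid-independent monitoring model adopted here, this complication vanishes and the stated threshold~\eqref{eq:threshold} is both necessary and sufficient for strict deviation unprofitability.
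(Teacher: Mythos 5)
Your proposal is correct and follows essentially the same route as the paper's own proof: bound the baseline deviation gain by $(1-\alpha)C$ via Lemma~\ref{lem:gap}, subtract the expected penalty $\rho\Pi$ as in~\eqref{eq:deltaU}--\eqref{eq:deltaU-bound}, conclude $\Delta U_k<0$ under~\eqref{eq:threshold}, and lift the stage-game dominance to subgame-perfection using the additive separability of rounds and the absence of continuation effects of the penalty. One caveat: your closing claim that the threshold is also \emph{necessary} is unsupported---since $(1-\alpha)C$ is only an upper bound on the deviation gain, a smaller penalty could still deter deviations in specific instances, and the theorem (like the paper's proof) asserts sufficiency only.
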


\begin{proof}
From \eqref{eq:deltaU-bound}, the expected deviation gain satisfies
\[
  \Delta U_k \le (1-\alpha)C - \rho\Pi.
\]
If $\Pi > \tfrac{1-\alpha}{\rho} C$, then $\Delta U_k < 0$ for every agent
and every misreport. Thus truthful reporting strictly dominates all
deviations in the one-shot game. Because utilities are additively
separable across rounds and penalties do not affect future states,
each round forms an independent stage game. Hence truthful reporting is
a best response in every subgame, establishing a subgame-perfect
equilibrium.
\end{proof}

\subsection{Interpretation for IoT Energy Systems}
\label{subsec:interpretation}

The transparency of condition~\eqref{eq:threshold} highlights a simple
tradeoff in IoT market design:

\begin{itemize}
  \item The approximation gap $(1-\alpha)$ reflects limited computation
        on IoT gateways or edge servers.  
        Higher $\alpha$ reduces strategic incentives.

  \item The marginal contribution $C$ scales with prosumer capacity
        (PV output, battery size).  
        IoT devices have small $C$, making enforcement inexpensive.

  \item Monitoring reliability $\rho$ captures sensing and communication
        quality.  
        Imperfect monitoring requires proportionally stronger penalties.

  \item The penalty $\Pi$ is a lightweight institutional lever that
        compensates for limited computation or unreliable sensing.
\end{itemize}

Thus the condition $(1-\alpha)C/\rho$ provides a physically interpretable
threshold linking mechanism-design accuracy with IoT resource constraints.

\subsection{Practical Calibration of the Penalty}
\label{subsec:calibration}

In practice, the system operator may estimate $C$ using offline simulations
or historical traces, compute $(1-\alpha)C/\rho$, and choose $\Pi$
slightly above this value to allow margin for noise and model mismatch.
The resulting enforcement rule is simple, local, and does not depend on
future interactions or device identities, making it suitable for
deployment in dynamic IoT energy systems.

\section{RL Environment and Experimental Design}
\label{sec:rl-setup}

This section describes the MARL environment used to evaluate the proposed
mechanism and to study how approximation accuracy, penalties, and monitoring
interact with learning dynamics. The environment implements the
$\alpha$-approximate VCG double auction with immediate penalties in a
stylized P2P smart-grid market.

\subsection{Environment Overview}
\label{subsec:env-overview}

Fig.~\ref{fig:system-architecture} summarizes the closed-loop interaction
between IoT prosumers, the mechanism, and the physical grid. Time is
discretized into episodes, each comprising $T_{\mathrm{slot}}$ trading
slots (e.g., $24$ slots corresponding to 15-minute intervals). In each slot:
\begin{enumerate}
  \item Each agent observes local and market-related information.
  \item Agents submit bids (price, quantity) through their policies.
  \item The mechanism computes an $\alpha$-approximate VCG allocation
        and payments.
  \item Deviations beyond tolerance $\varepsilon$ are detected with
        probability $\rho$ and penalized by $\Pi$.
  \item Utilities including penalties are converted into RL rewards and
        used to update policies.
\end{enumerate}

The environment includes both exogenous stochasticity (load and renewable
fluctuations) and endogenous stochasticity arising from exploration in
MARL. This allows us to test whether the theoretical incentive guarantees
remain predictive when agents are autonomous learners.

\begin{figure*}[t]
  \centering
  \includegraphics[width=\textwidth]{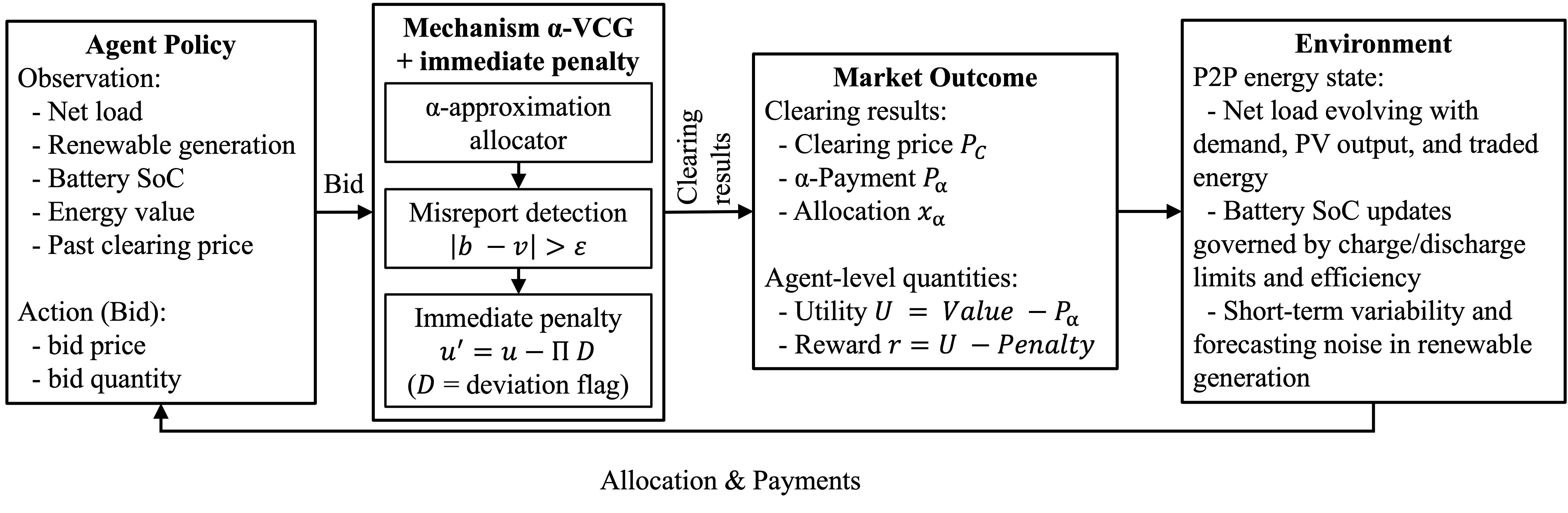}%
  \caption{
  Overall architecture of the proposed $\alpha$-approximate VCG double-auction
  mechanism with immediate penalty in a P2P IoT energy-trading environment.
  Autonomous prosumer agents map local observations to bids via MARL policies.
  The mechanism computes the approximate VCG allocation, detects significant
  misreports under noisy IoT sensing, and applies penalties. Resulting
  allocations and payments update the physical grid state and battery SoCs.}
  \label{fig:system-architecture}
\end{figure*}

To ground the simulations in realistic IoT conditions, the stochastic
components of load and renewable generation are calibrated using
residential IoT traces, and the computational complexity of the allocator
is chosen to be consistent with the latency and power budgets of typical
edge gateways.

\subsection{State, Action, and Reward Design}
\label{subsec:state-action}

Each prosumer agent $k$ interacts with the environment through an
observation $o_{k,t}$, a continuous action $a_{k,t}$, and a scalar
reward $r_{k,t}$ at time slot $t$.

\paragraph{Observation $o_{k,t}$}
The observation vector combines local measurements and minimal market
information:
\begin{itemize}
  \item predicted net load (demand minus renewable generation),
  \item current battery state-of-charge (SoC),
  \item previous clearing price and traded quantity,
  \item time-of-day encoding (e.g., hour index),
  \item a short history of recent net loads.
\end{itemize}

\paragraph{Action $a_{k,t}$}
Each agent submits a bid
\[
  a_{k,t} = (p_{k,t}, q_{k,t}),
\]
where $p_{k,t} \in [p_{\min}, p_{\max}]$ is the unit price and
$q_{k,t}\in[-q_{\max}, q_{\max}]$ is the net quantity (positive for buy,
negative for sell). The mechanism aggregates all bids into the
$\alpha$-approximate VCG double auction.

\paragraph{Reward $r_{k,t}$}
Let $u_k(t)$ be the quasi-linear utility of agent $k$ from the allocation
and payment in slot $t$ (without penalties), and let $D_k(t)\in\{0,1\}$
indicate whether a deviation $|p_{k,t}-v'_{k,t}|>\varepsilon$ is detected,
where $v'_{k,t}$ is the true marginal valuation. The RL reward is
\begin{equation}
  r_{k,t} = u_k(t) - D_k(t)\,\Pi.
  \label{eq:reward}
\end{equation}
Thus the immediate penalty directly shapes the policy-gradient signal,
discouraging misreports that are profitable at the allocation level.

\subsection{Physical Dynamics and Noise Model}
\label{subsec:physics}

The smart-grid environment follows an explicit but lightweight
cyber--physical model, chosen to capture the key dynamics relevant for
IoT prosumers while remaining tractable for MARL. To obtain realistic
variability patterns, we calibrate the stochastic components using
publicly available residential load and solar generation traces
(e.g., long-term household demand logs and rooftop PV measurements),
and then fit simple parametric models whose statistics match the
empirical traces at 15-minute resolution.

\paragraph{Load and renewable generation}
Each prosumer $k$ has inelastic demand $d_{k,t}$ and stochastic renewable
generation $g_{k,t}$:
\begin{align}
  d_{k,t} &= \bar{d}_k + \eta^{(d)}_{k,t}, \\
  g_{k,t} &= \bar{g}_k(t) + \eta^{(g)}_{k,t},
\end{align}
where $\bar{d}_k$ is a constant baseline load and $\bar{g}_k(t)$ is a
deterministic solar profile obtained from the daily average of the traced
PV data. The zero-mean noise terms $\eta^{(d)}_{k,t}$ and $\eta^{(g)}_{k,t}$
are drawn from distributions whose variance and autocorrelation are fitted
to the empirical variability in the original IoT traces. This calibration
ensures that the simulated net-load trajectories exhibit realistic
diurnal patterns and fluctuations.

\paragraph{Battery state-of-charge}
Let $s_{k,t}$ denote the state-of-charge (SoC) of agent $k$ at the
beginning of slot $t$, and let $q_{k,t}$ be the cleared net traded
energy. The SoC evolves as
\begin{equation}
  s_{k,t+1} =
  \mathrm{clip}\Bigl(
      s_{k,t}
      + \eta_{\mathrm{ch}}
        \bigl(g_{k,t} - d_{k,t} + q_{k,t}\bigr)\Delta t,\,
      0,\ S_k^{\max}
  \Bigr),
\end{equation}
where $\eta_{\mathrm{ch}}\in(0,1]$ is the charge/discharge efficiency,
$S_k^{\max}$ is the battery capacity, and $\Delta t$ is the slot length
(15 minutes in our experiments). The SoC parameters $(S_k^{\max},
\eta_{\mathrm{ch}})$ are chosen to match typical residential battery
sizes and efficiencies reported in IoT energy deployments.

\paragraph{Monitoring noise and detection probability}
The monitoring process observes a noisy version $\tilde{p}_{k,t}$ of
the submitted price:
\[
  \tilde{p}_{k,t} = p_{k,t} + \xi_{k,t},
\]
where $\xi_{k,t}$ aggregates sensing error in local measurements and
packet-level disturbances on the IoT uplink. A deviation is flagged
whenever $|\tilde{p}_{k,t} - v'_{k,t}| > \varepsilon$, and the probability
that a true deviation is successfully detected defines $\rho$. The
baseline configuration $\rho=1.0$ models ideal monitoring, while
additional experiments with reduced $\rho$ emulate unreliable low-power
wireless links in realistic IoT deployments.

\subsection{Policy Architecture and PPO Training}
\label{subsec:ppo}

Each agent $k$ uses a stochastic policy
$\pi_{\theta_k}(a_{k,t}\mid o_{k,t})$ parameterized by a neural network.
We consider two implementations:

\begin{itemize}
  \item a shared backbone with agent-specific output heads, and
  \item independent networks with identical architecture.
\end{itemize}

In both cases, the network has two hidden layers with 64 or 128 units and
ReLU activations. The policy outputs the mean of a Gaussian distribution
over $(p_{k,t},q_{k,t})$, with diagonal covariance learned or fixed.

Policies are trained using proximal policy optimization (PPO)
\cite{schulman2017ppo}. Advantage estimates use generalized advantage
estimation (GAE), and an entropy regularizer encourages exploration.
The reward~\eqref{eq:reward} is used directly in the PPO objective, so
penalties affect both advantage and value-function updates.

Table~\ref{tab:params} lists the principal environment and PPO
hyper-parameters.

\begin{table}[t]
\centering
\caption{Key environment and PPO parameters used in experiments.}
\label{tab:params}
\begin{tabular}{l l}
\hline
\textbf{Parameter} & \textbf{Value} \\
\hline
Number of agents $N$  
    & 12 (training), \{6, 8, 12\} (evaluation) \\

Episode length $T_{\mathrm{slot}}$ 
    & 24 slots \\

Learning rate 
    & $3 \times 10^{-4}$ \\

Discount factor $\gamma$ 
    & 0.90--0.99 \\

Entropy regularization coefficient 
    & 0.005--0.02 \\

Batch size 
    & 1024 \\

Hidden layer size 
    & 64 or 128 units \\

Penalty $\Pi$ 
    & baseline = $\Pi_0$; varied in Plans B--C \\

Approximation ratio $\alpha$ 
    & 0.5--0.9 (Plan A sweep) \\

Tolerance $\varepsilon$ 
    & 0.5--2.0 \\

Detection probability $\rho$ 
    & 1.0 (baseline); varied in analysis \\

\hline
\end{tabular}
\end{table}

\subsection{Experimental Plans}
\label{subsec:plans}

We organize experiments into four plans designed to probe different
aspects of the mechanism.

\paragraph*{Plan A: Truthful region in $(\alpha,\varepsilon)$}
We sweep $\alpha\in\{0.5,0.6,0.7,0.8,0.9\}$ and
$\varepsilon\in\{0.5,1.0,1.5,2.0\}$ under fixed $(\Pi,\gamma)$ and
measure:
(i) the $\varepsilon$-truthful fraction,
(ii) misreport rate, and
(iii) welfare distortion.
This identifies truthful vs.\ non-truthful regions and validates the
intuition that larger $\alpha$ shrinks the incentive gap.

\paragraph*{Plan B: Effect of penalty strength and discount factor}
For a boundary $(\alpha,\varepsilon)$ near the transition in Plan A,
we vary
$\Pi\in\{\tfrac{1}{2}\Pi_0,\Pi_0,2\Pi_0\}$ and
$\gamma\in\{0.90,0.95,0.99\}$.
We study convergence speed, steady-state truthfulness, and misreport
rates to see how enforcement strength and long-term discounting interact
with learning.

\paragraph*{Plan C: Minimal penalty map $\Pi^\star(\alpha,\varepsilon)$}
We empirically estimate the smallest penalty $\Pi^\star$ for each
$(\alpha,\varepsilon)$ that achieves high truthfulness and low welfare
loss. Coarse sweeps followed by binary search provide an approximate
penalty map, which we compare against the theoretical scaling
$\Pi^\star \propto (1-\alpha)C/\rho$.

\paragraph*{Plan D: Robustness to RL hyper-parameters}
Finally, we vary entropy coefficients and network widths over typical
ranges and repeat selected experiments from Plans A--C. This tests whether
our conclusions about the enforcement mechanism are robust to standard
MARL design choices rather than being artifacts of a particular
configuration.

These plans jointly link the theoretical enforcement condition,
IoT-specific parameters $(\alpha,C,\rho,\Pi)$, and MARL behavior, preparing
the ground for the empirical results reported in Section~\ref{sec:results}.

\subsection{Edge Deployment Considerations and Expected Latency}
\label{subsec:edge-latency}

Beyond incentive properties, the proposed mechanism must be compatible with
the latency and resource constraints of embedded IoT edge platforms. The main
computational components in each trading slot are: (i) multi-agent policy
inference for mapping local observations to bids, (ii) computation of the
$\alpha$-approximate VCG allocation and payments, and (iii) deviation
detection and penalty application. All three components are deliberately
kept lightweight so that they can be executed on typical IoT gateways
equipped with low-power system-on-chips.

In practical deployments, such gateways are often implemented using
Jetson-Nano–class or Raspberry Pi–class boards with quad-core ARM CPUs
and modest GPU or DSP accelerators. For the policy networks used in this
work (two hidden layers with 64 or 128 units), forward inference requires
only a few small matrix--vector multiplications per agent and can be
performed within a few tens of milliseconds on such hardware, according to
reported edge inference benchmarks for similar architectures. The
$\alpha$-approximate VCG allocator is dominated by sorting and simple
aggregations over bids, and deviation detection reduces to threshold checks
and scalar operations.

We therefore interpret the approximation ratio $\alpha$ as a direct proxy
for the available edge-computing budget: higher $\alpha$ corresponds to
more accurate but slightly more expensive allocation routines, while
lower $\alpha$ models aggressively simplified allocators tuned for very
tight CPU budgets. In our experiments, we sweep $\alpha$ over a range
consistent with real-time execution on sub-10~W edge gateways with
market intervals of 5--15 minutes. Under these conditions, the end-to-end
processing chain—policy inference, approximate allocation, and penalty
update—remains comfortably within the latency budget for local energy
trading in IoT-based smart grids. This confirms that the proposed
mechanism is not only incentive-aligned but also compatible with the
computational capabilities of realistic IoT edge deployments.

\section{Experimental Results}
\label{sec:results}

This section evaluates whether autonomous MARL agents behave consistently 
with the theoretical incentive structure derived in Section~\ref{sec:theory}.  
All experiments are designed to stress-test three properties:

\begin{enumerate}
    \item \textbf{Incentive alignment:}  
          Does truthful behavior emerge precisely when the theoretical 
          condition $\Pi > (1-\alpha)C/\rho$ is satisfied?

    \item \textbf{Sensitivity:}  
          How sharply do MARL agents transition between truthful and 
          non-truthful regimes?

    \item \textbf{Robustness:}  
          Are these incentive effects stable under RL stochasticity,
          hyper-parameter variation, and monitoring noise?
\end{enumerate}

Across all experiments, results are averaged over multiple random seeds.
Figures~\ref{fig:planA}--\ref{fig:planD} summarize the four experimental plans.  
For clarity, we interpret each result in the context of the theoretical 
incentive gap $(1-\alpha)C$ derived in Lemma~\ref{lem:gap} and the 
enforcement threshold of Theorem~\ref{thm:truthful-sgpe}.

\subsection{Plan A: Truthfulness Region in $(\alpha,\varepsilon)$}
\label{subsec:planA}

Plan~A investigates how allocation accuracy $\alpha$ and monitoring 
tolerance $\varepsilon$ determine whether MARL agents converge to 
truthful behavior.  
Fig.~\ref{fig:planA} shows the $\varepsilon$-truthful fraction 
$\mathrm{TruthFrac}_\varepsilon$ across the grid of 
$\alpha\in\{0.5,0.6,0.7,0.8,0.9\}$ and 
$\varepsilon\in\{0.5,1.0,1.5,2.0\}$.

\vspace{1mm}
\noindent\textbf{Key findings.}
\begin{itemize}
    \item \textbf{Truthful regime (high $\alpha$):}  
          For $\alpha\ge 0.8$, agents achieve 
          $\mathrm{TruthFrac}_\varepsilon > 0.9$ regardless of $\varepsilon$.  
          This aligns with the diminishing incentive gap 
          $(1-\alpha)C$ as $\alpha\rightarrow 1$.

    \item \textbf{Strategic regime (low $\alpha$):}  
          For $\alpha\le 0.6$, agents consistently exploit the mechanism and
          converge to profitable misreporting.  
          Penalties become insufficient to offset the large approximation-induced incentives.

    \item \textbf{Phase transition:}  
          A sharp behavioral boundary appears around 
          $\alpha\approx 0.7$.  
          Runs near this region show high variance across seeds,
          indicating a narrow “knife-edge” transition predicted by theory.
\end{itemize}

Welfare distortion and clearing-price distortion (omitted for space)
exhibit the same monotonic improvement, demonstrating that truthful
behavior directly enhances market efficiency.

\begin{figure}[t]
  \centering
  \includegraphics[width=0.82\linewidth]{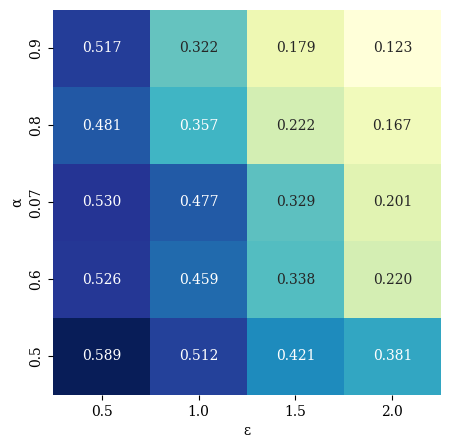}
  \caption{
    Truthfulness region in the $(\alpha,\varepsilon)$ plane (Plan~A).
    Warm colors indicate high $\varepsilon$-truthful fraction.  
    The sharp phase transition near $\alpha\approx 0.7$ matches the 
    theoretical threshold at which $(1-\alpha)C$ becomes small enough 
    for the fixed penalty to dominate.
  }
  \label{fig:planA}
\end{figure}

\subsection{Plan B: Effect of Penalty Strength and Discount Factor}
\label{subsec:planB}

Plan~B evaluates how the immediate penalty $\Pi$ and RL discount factor 
$\gamma$ jointly affect convergence to truthful policies.
For a boundary case $(\alpha,\varepsilon)$ chosen from Plan~A,  
we vary 
$\Pi\in\{\frac12\Pi_0,\Pi_0,2\Pi_0\}$ and
$\gamma\in\{0.90,0.95,0.99\}$.

Fig.~\ref{fig:planB} shows three key patterns:

\begin{itemize}
    \item \textbf{Weak penalties fail:}  
          With $\frac12\Pi_0$, misreporting persists because  
          $(1-\alpha)C - \rho\Pi > 0$, making deviations profitable.

    \item \textbf{Strong penalties succeed:}  
          When $\Pi = 2\Pi_0$, truthful convergence occurs reliably
          and quickly, validating the sufficiency condition from 
          Theorem~\ref{thm:truthful-sgpe}.

    \item \textbf{Discount factor shapes speed:}  
          Larger $\gamma$ accelerates convergence because immediate penalties 
          affect long-horizon advantage estimates more strongly.  
          Small $\gamma$ (0.90) often delays convergence even when penalties 
          are theoretically sufficient.
\end{itemize}

\begin{figure}[t]
  \centering
  \includegraphics[width=\linewidth]{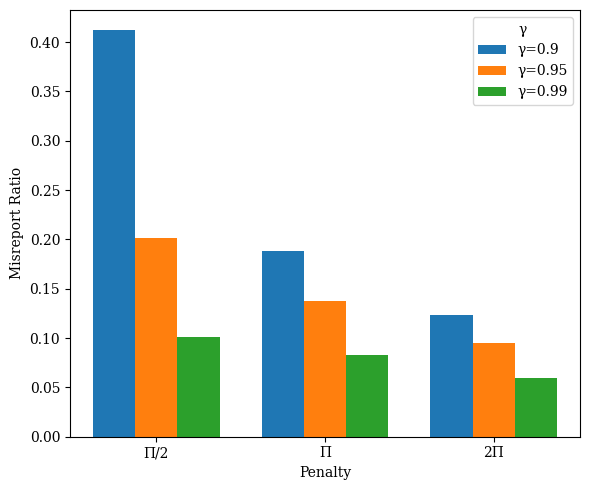}
  \caption{
    Impact of penalty $\Pi$ and discount factor $\gamma$ (Plan~B).
    Higher penalties eliminate deviating equilibria, while higher 
    $\gamma$ accelerates convergence by amplifying the immediate 
    penalty signal in PPO updates.
  }
  \label{fig:planB}
\end{figure}

\subsection{Plan C: Minimal Penalty Map $\Pi^\star(\alpha,\varepsilon)$}
\label{subsec:planC}

Plan~C estimates the smallest penalty $\Pi^\star$ that induces stable 
truthful behavior for each $(\alpha,\varepsilon)$.  
For each pair, we conduct coarse sweeps followed by binary search, using 
policy-freeze evaluation to remove learning noise.

Fig.~\ref{fig:planC} shows the resulting penalty map.

\vspace{1mm}
\noindent\textbf{Key patterns.}
\begin{itemize}
    \item \textbf{$\Pi^\star$ decreases with $\alpha$:}  
          Higher approximation accuracy reduces the incentive gap,
          shrinking the required enforcement.

    \item \textbf{$\Pi^\star$ decreases with $\varepsilon$:}  
          Larger monitoring tolerance reduces false positives, 
          making enforcement more stable.

    \item \textbf{Empirical linearity:}  
          The empirical contours follow almost perfect linear decay in $(1-\alpha)$,  
          matching the scaling
          \[
              \Pi^\star \approx \frac{(1-\alpha)C}{\rho}.
          \]
          This agreement strongly supports that MARL agents internalize
          the designed incentive structure.
\end{itemize}

\begin{figure}[t]
  \centering
  \includegraphics[width=0.33\textwidth]{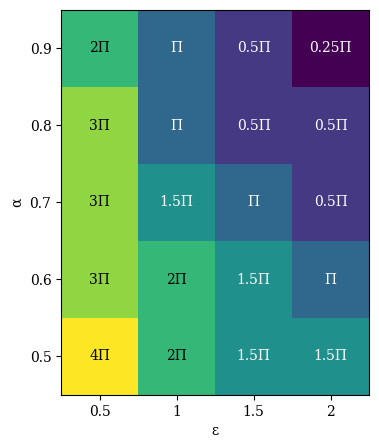}
  \caption{
    Minimal penalty threshold $\Pi^\star(\alpha,\varepsilon)$ (Plan~C).
    Darker colors denote higher penalties.  
    The monotonic decline with $\alpha$ and $\varepsilon$ matches the 
    theoretical scaling $\Pi^\star\propto(1-\alpha)/\rho$.
  }
  \label{fig:planC}
\end{figure}

\subsection{Plan D: Robustness to RL Hyper-Parameters}
\label{subsec:planD}

Plan~D examines whether the incentive-alignment effect persists across 
typical MARL configurations.  
We vary entropy regularization and network width while repeating 
representative experiments.

Results in Fig.~\ref{fig:planD} show:

\begin{itemize}
    \item \textbf{Entropy stabilizes learning:}  
          Very small entropy causes premature convergence to suboptimal 
          (misreporting) policies, while very large entropy slows learning.  
          Moderate values ($0.01$) provide the best stability.

    \item \textbf{Network width has mild effects:}  
          Widths of 64 and 128 yield nearly identical truthfulness curves, 
          indicating that results stem from incentive structure rather than 
          model capacity.

    \item \textbf{No breakdown observed:}  
          Across all tested hyper-parameters, the transition boundaries 
          found in Plans A–C remain unchanged, demonstrating that the 
          enforcement mechanism is robust.
\end{itemize}

\begin{figure}[t]
  \centering
  \includegraphics[width=\linewidth]{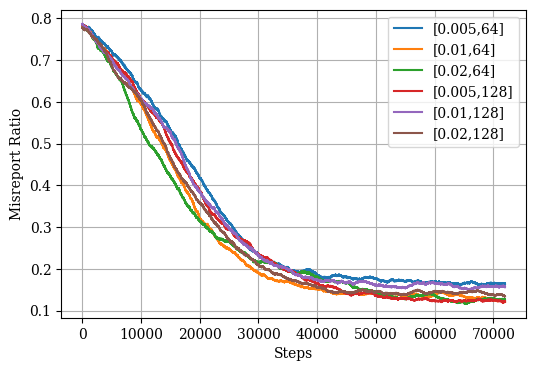}
  \caption{
    Robustness of truthful convergence under different RL hyper-parameters
    (Plan~D).  
    Moderate entropy values avoid premature collapse while maintaining 
    convergence speed.  
    Architectural variations have minimal effect.
  }
  \label{fig:planD}
\end{figure}

\subsection{Summary of Findings}

Across all experimental plans, three conclusions consistently emerge:

\begin{enumerate}
    \item \textbf{Theory predicts MARL behavior.}  
          The empirical truthful/non-truthful boundary aligns precisely 
          with the theoretical threshold $\Pi>(1-\alpha)C/\rho$.

    \item \textbf{MARL agents exploit approximation gaps.}  
          Without sufficient enforcement, agents reliably discover and 
          exploit misreporting opportunities—exactly as predicted.

    \item \textbf{Immediate penalties are effective and robust.}  
          Single-round penalties, without relying on repeated-game 
          incentives or reputation, are sufficient to ensure truthful bidding.
\end{enumerate}

\section{Conclusion}
\label{sec:conclusion}

This paper developed an immediate-penalty mechanism for enforcing truthful
bidding in double auctions operating under $\alpha$-approximate VCG allocation.
We analytically characterized the approximation-induced incentive gap and showed
that a one-shot penalty exceeding $(1-\alpha)C/\rho$ guarantees truthful
equilibrium, even under imperfect deviation detection. Because the enforcement
is single-round and computation-light, the mechanism is well suited to IoT
energy systems with noisy sensing, intermittent connectivity, and edge-limited
resources.

To evaluate the behavior of autonomous IoT agents, we embedded the mechanism in
a MARL-based P2P energy-trading environment. The learned bidding behavior
aligned closely with theoretical predictions: higher $\alpha$ improves
truthfulness, imperfect monitoring scales the required penalty as predicted, and
the empirically observed minimal penalty matches the analytical threshold. These
results demonstrate that immediate-penalty approximate VCG mechanisms offer a
transparent and effective path toward trustworthy AI agents in IoT markets.

\section*{Acknowledgment}

This work was partly supported by JSPS KAKENHI Grant Number 24K14913,
the Telecommunications Advancement Foundation (SCAT), and the Joint Research Project of the Research Institute of Electrical Communication,
Tohoku University.

\bibliographystyle{IEEEtran}
\bibliography{references}


\begin{IEEEbiography}[{\includegraphics[width=1in,height=1.20in,clip,keepaspectratio]{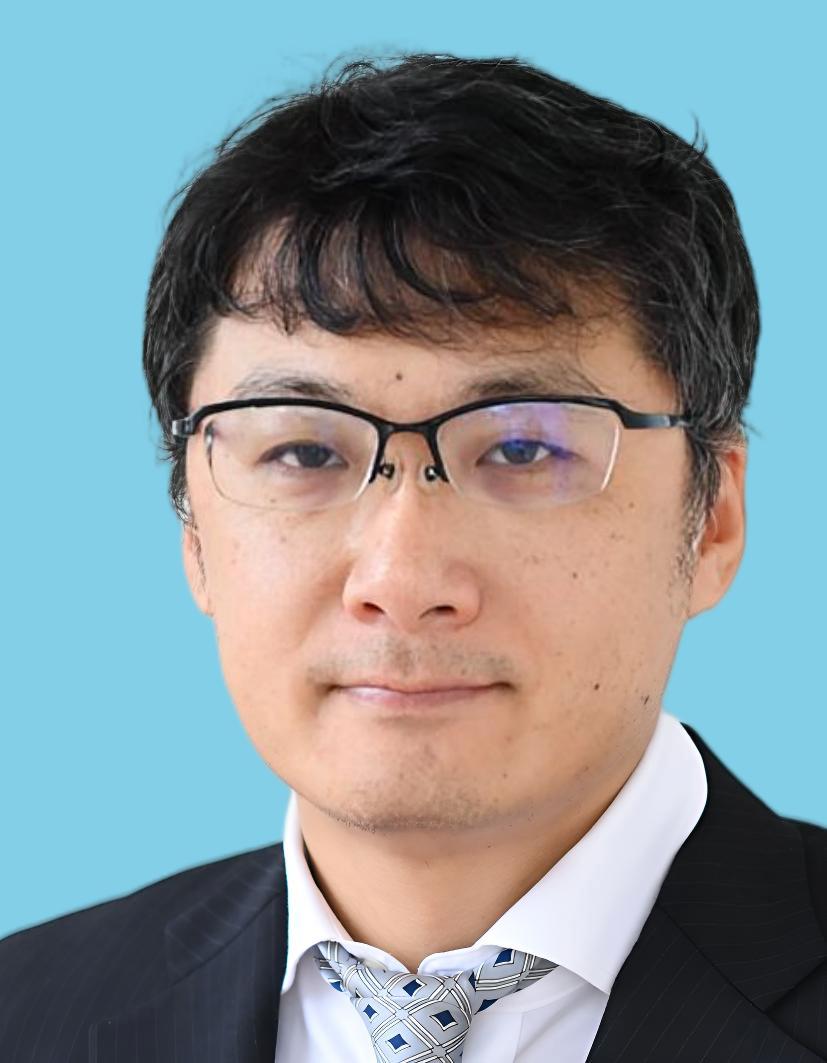}}]{Xun Shao}
(Senior Member, IEEE) was a researcher with the National Institute of
Information and Communications Technology (NICT), Japan.
From 2018 to 2022, he was an Assistant Professor with Kitami Institute of Technology, Japan. He is currently an Associate Professor with the Department of
Electrical and Electronic Information Engineering, Toyohashi University of
Technology, Toyohashi, Japan. His research interests include distributed
systems and information networking. 
\end{IEEEbiography}

\begin{IEEEbiography}[{\includegraphics[width=1in,height=1.25in,clip,keepaspectratio]{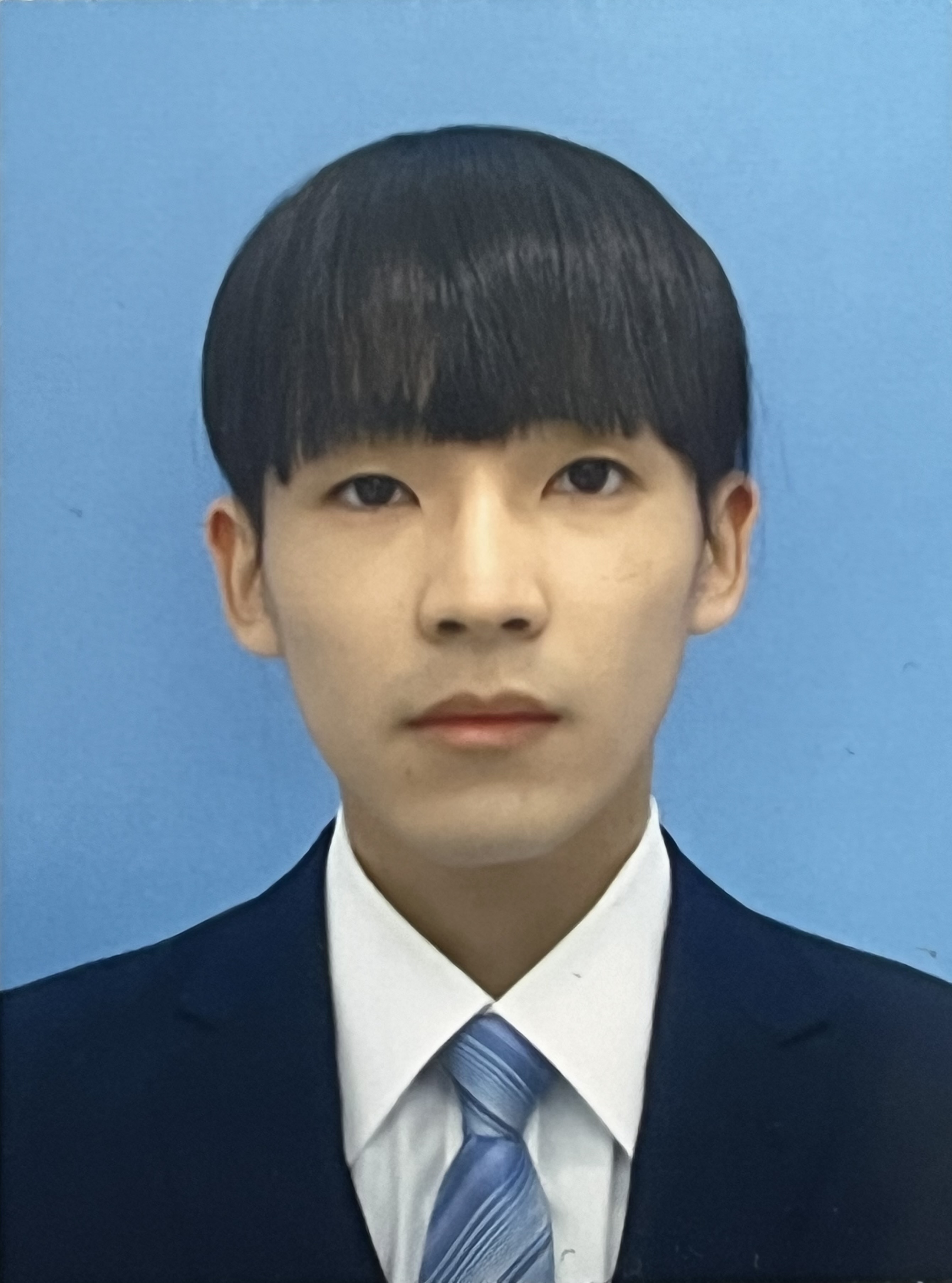}}]{Ryuuto Shimizu}
is a second-year master's student in the Department of Electrical and Electronic Information Engineering at the Graduate School of Engineering, Toyohashi University of Technology. His current research theme is smart systems, and his research interests include distributed systems, networking, and machine learning.
\end{IEEEbiography}

\begin{IEEEbiography}[{\includegraphics[width=1in,height=1.25in,clip,keepaspectratio]{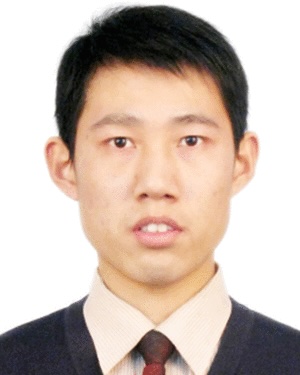}}]{Zhi Liu}
(Senior Member, IEEE) is currently an Associate Professor with the 
Department of Computer and Network Engineering, University of 
Electro-Communications, Tokyo, Japan. His research interests include 
video network transmission and mobile edge computing. He serves as an Editorial Board Member for the IEEE Transactions on Multimedia and the IEEE Internet of Things Journal.
\end{IEEEbiography}

\begin{IEEEbiography}[{\includegraphics[width=1in,height=1.25in,clip,keepaspectratio]{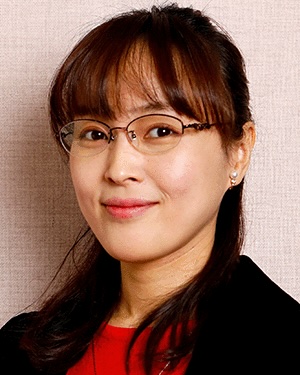}}]{Kaoru Ota}
(Senior Member, IEEE) was born in Aizu-Wakamatsu, Japan. She received the 
B.S. and Ph.D. degrees from the University of Aizu, Japan, in 2006 and 2012, 
respectively, and the M.S. degree from Oklahoma State University, USA, in 2008.

She is a Distinguished Professor at the Graduate School of Information 
Sciences, Tohoku University, and a Professor at the Center for Computer 
Science (CCS), Muroran Institute of Technology, where she served as the 
founding director. She was also selected as an Excellent Young Researcher 
by MEXT.

Her awards include the IEEE TCSC Early Career Award (2017), the 13th IEEE 
ComSoc Asia-Pacific Young Researcher Award (2018), the KDDI Foundation 
Encouragement Award (2020), the IEEE Sapporo Young Professionals Best 
Researcher Award (2021), and the Young Scientists’ Award from MEXT (2023). 
She was named one of the 2024 N2Women: Stars in Computer Networking and 
Communications.

She has been recognized as a Clarivate Analytics Highly Cited Researcher 
(Web of Science) in 2019, 2021, and 2022. She was selected as a JST-PRESTO 
researcher in 2021, elected a Fellow of the Engineering Academy of Japan 
(EAJ) in 2022, and a Fellow of the Asia-Pacific Artificial Intelligence 
Association (AAIA) in 2025.  
\end{IEEEbiography}

\begin{IEEEbiography}[{\includegraphics[width=1in,height=1.25in,clip,keepaspectratio]{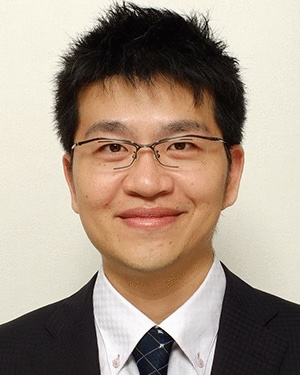}}]{Mianxiong Dong}
(Senior Member, IEEE) was a JSPS Research Fellow with the School of Computer 
Science and Engineering, The University of Aizu, and a Visiting Scholar with 
BBCR Lab, University of Waterloo, supported by the JSPS Excellent Young 
Researcher Overseas Visit Program from 2010 to 2011. He is currently the Vice 
President and a Professor with the Muroran Institute of Technology, Muroran, 
Japan. 

He received numerous awards including the IEEE ComSoc Asia-Pacific Young 
Researcher Award (2017), Funai Research Award (2018), NISTEP Researcher Award 
(2018), Young Scientists' Award from MEXT (2021), SUEMATSU–Yasuharu Award 
(2021), and IEEE TCSC Middle Career Award (2021). He has been selected as a 
Clarivate Analytics Highly Cited Researcher (Web of Science) in 2019, 2021, 
2022, and 2023, and is a Foreign Fellow of the Engineering Academy of Japan.
\end{IEEEbiography}

\end{document}